\newcommand{\upsidedownbang}{\rotatebox[origin=c]{180}{!}}
\newcommand{\qwire}{\ensuremath{\mathcal{Q}\textsc{wire}}\xspace}
\newcommand{\revs}{R\textsc{evs}\xspace}
\newcommand{\reverc}{R\textsc{e}V\textsc{er}C\xspace}
\newcommand{\qnot}[2]{\draw (#1,#2) circle [radius=0.25]; \draw (#1,#2-0.25) -- (#1,#2+0.25);}
\newcommand{\cnot}[3]{\draw[fill=black] (#1,#2) circle [radius=0.13]; \draw (#1,#3) circle [radius=0.25]; \draw (#1,#2) -- (#1,#3+0.25);}
\newcommand{\tof}[4]{\draw[fill=black] (#1,#2) circle [radius=0.13]; \draw[fill=black] (#1,#3) circle [radius=0.13]; \draw (#1,#4) circle [radius=0.25]; \draw (#1,#2) -- (#1,#4+0.25);}
\newcommand{\TRUE}{t}
\newcommand{\FALSE}{f}
\tikzset{ machine/.style={
    rectangle,
    minimum width=20mm,
    minimum height=10mm,
    text width=16mm,
    align=center,
    very thick,
    draw=black,
    color=black,
    fill=white,
    font=\ttfamily,
  }
}
\newcommand{\etal}{\emph{et al.}\xspace}
\newcommand{\ie}{\emph{i.e.,}\xspace}
\newcommand{\denote}[1]{\ensuremath{\llbracket#1\rrbracket}}
\newcommand{\One}{\code{One}}
\newcommand{\Bit}{\code{Bit}}
\newcommand{\Qubit}{\code{Qubit}}
\definecolor{ltblue}{rgb}{0,0.4,0.4}
\definecolor{dkblue}{rgb}{0,0.1,0.6}
\definecolor{dkgreen}{rgb}{0,0.35,0}
\definecolor{dkviolet}{rgb}{0.3,0,0.5}
\definecolor{dkred}{rgb}{0.5,0,0}
\newcommand{\code}[1]{\textup{\texttt{#1}}}
\let\Alpha=A
\let\Beta=B
\let\Epsilon=E
\let\Zeta=Z
\let\Eta=H
\let\Iota=I
\let\Kappa=K
\let\Mu=M
\let\Nu=N
\let\Omicron=O
\let\omicron=o
\let\Rho=P
\let\Tau=T
\let\Chi=X
  \newcommand{\jennifer}[1]{\textbf{\textcolor{red}{[ #1 --- Jennifer]}}}
  \newcommand{\steve}[1]{\textbf{\textcolor{purple}{[ #1 --- Steve]}}}
  \newcommand{\robert}[1]{\textbf{\textcolor{blue}{[ #1 --- Robert]}}}
  \newcommand{\fixme}[1]{\textbf{\textcolor{red}{[ Fixme: #1]}}}
  \newcommand{\note}[1]{\textbf{\textcolor{green}{[ Note: #1 ]}}}
  \newcommand{\todo}[1]{\textbf{\textcolor{green}{[ TODO: #1 ]}}}
  \newcommand{\jennifer}[1]{}
  \newcommand{\steve}[1]{}
  \newcommand{\robert}[1]{}
  \newcommand{\fixme}[1]{}
  \newcommand{\note}[1]{}
  \newcommand{\todo}[1]{}
\newtheorem{theorem}{Theorem}[section]
\newtheorem{lemma}[theorem]{Lemma}
\newtheorem{conjecture}[theorem]{Conjecture}
\title{\textit{Re}\qwire: Reasoning about Reversible Quantum Circuits
\thanks{This work is supported in part by AFOSR MURI No. FA9550-16-1-0082}}
\author{Robert Rand \qquad Jennifer Paykin \qquad  \qquad Dong-Ho Lee \qquad Steve Zdancewic \\
\{rrand, jpaykin, dongle, stevez\}@cis.upenn.edu \institute{~\\ \Large University of Pennsylvania}}
\renewcommand\footnotemark{} 
\begin{document}

\maketitle


\begin{abstract}
Common quantum algorithms make heavy use of ancillae: scratch qubits that are initialized at some state and later returned to that state and discarded. Existing quantum circuit languages let programmers assert that a qubit has been returned to the $\ket{0}$ state before it is discarded, allowing for a range of optimizations. However, existing languages do not provide the tools to verify these assertions, introducing a potential source of errors. In this paper we present methods for verifying that ancillae are discarded in the desired state, and use these methods to implement a verified compiler from classical functions to quantum oracles.

\end{abstract}

\section{Introduction}
\label{sec:introduction}

\begin{figure}
\begin{center} \begin{tikzpicture}[scale=0.8, x = 0.8cm, y=-0.8cm]

    \draw (0,0) node[left] {$a$} -- (6,0) node[right] {$a$};
    \draw (0,1) node[left] {$b$} -- (6,1) node[right] {$b$};
    \draw (0,2) node[left] {$z$} -- (6,2) node[right] {$z \oplus (a \wedge b)$};

    \draw (12,0) node[left] {$a$} -- (18,0) node[right] {$a$};
    \draw (12,1) node[left] {$b$} -- (18,1) node[right] {$b$};
    \draw (12,2) node[left] {$z$} -- (18,2) node[right] {$z \oplus (a \vee b)$};

    \tof{3}{0}{1}{2}

    \qnot{13.5}{0}
    \qnot{13.5}{1}
    \tof{15}{0}{1}{2} 
    \qnot{16.5}{0}
    \qnot{16.5}{1}
    \qnot{16.5}{2}

\end{tikzpicture} \end{center}
\caption{Quantum oracles implementing the boolean $\wedge$ and $\vee$. The $\oplus$ gates represent negation, and $\bullet$ represents control.}
\label{fig:andor}
\end{figure}
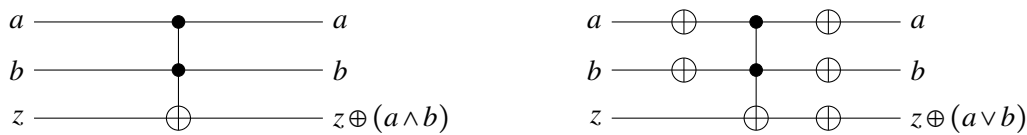

Many quantum algorithms rely heavily on \emph{quantum oracles}, classical programs 
executed inside quantum circuits. Toffoli proved that any
classical, boolean-valued function $f(x)$ can be implemented as a unitary
circuit $f_u$ satisfying $f_u(x,z) = (x,z ⊕ f(x))$~\cite{Toffoli1980}. Toffoli's
construction for quantum oracles is used in many quantum algorithms, such as the
modular arithmetic of Shor's algorithm~\cite{Shor1999}. As a concrete example,
\cref{fig:andor} shows quantum circuits that implement the boolean functions
and ($\wedge$) and or ($\vee$).

Unfortunately, Toffoli's construction introduces significant overhead. Consider
a circuit meant to compute the boolean formula $(a \vee b) \wedge (c \vee d)$.
The circuit needs two additional scratch wires, or \emph{ancillae}, to carry the
outputs of $(a \vee b)$ and $(c \vee d)$, as seen in \cref{fig:sat}. The
annotation $0$ at the start of a wire means that qubit is initialized in the
state $\ket{0}$. When constructed in this naive way, the resulting circuit no
longer corresponds to a unitary
transformation, 
and cannot be safely used in a larger quantum circuit. 

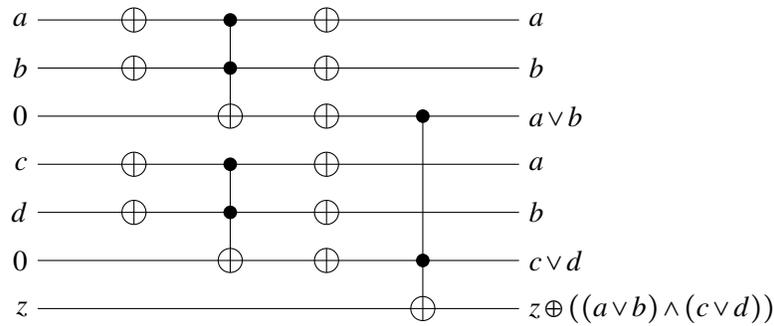
\begin{figure}
\begin{center} \begin{tikzpicture}[scale=0.8, x = 0.8cm, y=-0.8cm]

    \draw (0,0) node[left] {$a$} -- (10,0) node[right] {$a$};
    \draw (0,1) node[left] {$b$} -- (10,1) node[right] {$b$};
    \draw (0,2) node[left] {$0$} -- (10,2) node[right] {$a \vee b$};
    \draw (0,3) node[left] {$c$} -- (10,3) node[right] {$a$};
    \draw (0,4) node[left] {$d$} -- (10,4) node[right] {$b$};
    \draw (0,5) node[left] {$0$} -- (10,5) node[right] {$c \vee d$};
    \draw (0,6) node[left] {$z$} -- (10,6) node[right] {$z \oplus ((a \vee b) \wedge (c \vee d))$};

    \qnot{2}{0}
    \qnot{2}{1}
    \qnot{2}{3}
    \qnot{2}{4}    
    \tof{4}{0}{1}{2}
    \tof{4}{3}{4}{5}
    \qnot{6}{0}
    \qnot{6}{1}
    \qnot{6}{2}
    \qnot{6}{3}
    \qnot{6}{4}    
    \qnot{6}{5}
    \tof{8}{2}{5}{6}

\end{tikzpicture} \end{center}
\caption{An non-unitary quantum oracle for $(a \vee b) \wedge (c \vee d)$}
\label{fig:sat}
\end{figure}

The solution is to \emph{uncompute} the intermediate values $a \vee b$ and
$c \vee d$ and then discard them at the end of the quantum circuit
(\cref{fig:rsat}). The annotation $0$ at the end of a wire is an
\emph{assertion} that the qubit at that point is in the zero state, at which
point we can safely discard it without affecting the remainder of the state.
(If we measured and discarded a non-zero qubit, we would affect whatever qubits it was entangled with.)

How can we verify that such an assertion is actually true? We cannot dynamically
check the assertion, since we can only access the value of a qubit by measuring
it, collapsing the qubit in question to a $0$ or $1$ state. However, we can
statically reason that the qubit must be in the state $\ket{0}$ by analyzing the
circuit semantics.

\begin{figure}
\begin{center} \begin{tikzpicture}[scale=0.8, x = 0.8cm, y=-0.8cm]

    \draw (0,0) node[left] {$a$} -- (16,0) node[right] {$a$};
    \draw (0,1) node[left] {$b$} -- (16,1) node[right] {$b$};
    \draw (1,2) node[left] {$0$} -- (15,2) node[right] {$0$};
    \draw (0,3) node[left] {$c$} -- (16,3) node[right] {$a$};
    \draw (0,4) node[left] {$d$} -- (16,4) node[right] {$b$};
    \draw (1,5) node[left] {$0$} -- (15,5) node[right] {$0$};
    \draw (0,6) node[left] {$z$} -- (14,6) node[right] {$z \oplus ((a \vee b) \wedge (c \vee d))$};

    \qnot{2}{0}
    \qnot{2}{1}
    \qnot{2}{3}
    \qnot{2}{4}    
    \tof{4}{0}{1}{2}
    \tof{4}{3}{4}{5}
    \qnot{6}{0}
    \qnot{6}{1}
    \qnot{6}{2}
    \qnot{6}{3}
    \qnot{6}{4}    
    \qnot{6}{5}
    \tof{8}{2}{5}{6}
    \qnot{10}{0}
    \qnot{10}{1}
    \qnot{10}{2}
    \qnot{10}{3}
    \qnot{10}{4}    
    \qnot{10}{5}
    \tof{12}{0}{1}{2}
    \tof{12}{3}{4}{5}
    \qnot{14}{0}
    \qnot{14}{1}
    \qnot{14}{3}
    \qnot{14}{4}    

\end{tikzpicture} \end{center}
\caption{A unitary quantum oracle for $(a \vee b) \wedge (c \vee d)$ with ancillae}
\label{fig:rsat}
\end{figure}
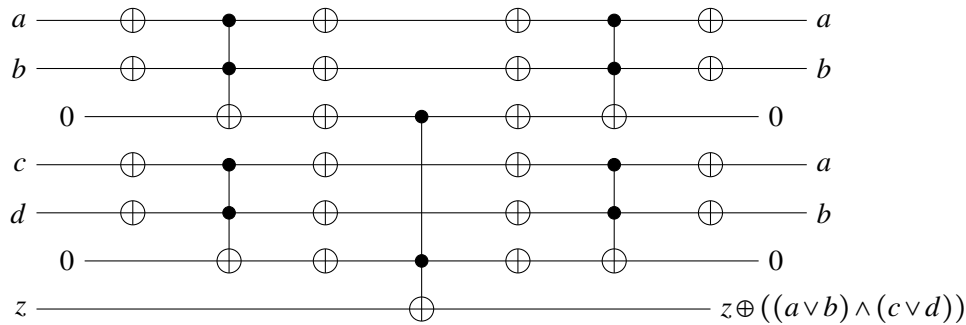

The claim that a qubit is in the $0$ state is a \emph{semantic} assertion about the
behavior of the circuit. Unfortunately, this makes it hard to verify---computing
the semantics of a quantum program is computationally intractable in the general case.
Circuit programming languages often allow users to make such assertions but not
to verify that they are true. For example, Quipper~\cite{Green2013} allows
programmers to make assertions about the state of ancillae, but these assertions
are never checked. Likewise, in Q\# \cite{Svore2018} the assertion will be checked by
a simulator but cannot be checked when a program is run on a quantum computer.
Hence, when the qubit is reused, a common use for ancillae which Q\# emphasizes, it may be in the wrong state. The QCL quantum
circuit language~\cite{Omer2005} provides a
built-in method for creating reversible circuits from classical functions, but
the programmer must trust this method to safely manage ancillae. In a step in the
right direction, the $\reverc$
compiler~\cite{Amy2017} for the (non-quantum) reversible computing language
$\revs$~\cite{Parent2017} provides a similar approach to compilation and verifies that it
correctly uncomputes its ancilla. However, other assertions in $\revs$ that a
wire is correctly in the $0$ state are ignored if they cannot be automatically
verified.

In this paper, we develop verification techniques for safely working with
ancillae. Our approach allows the programmer to discard qubits that are in the
state $\ket{0}$ or $\ket{1}$, provided that she first formally proves that the
qubits are in the specified state. Inspired by the $\reverc$
compiler~\cite{Amy2017}, we also provide syntactic conditions that the
programmer may satisfy to guarantee that her assertions are true. However,
circuits do not need to match this syntactic specification: a programmer may
instead manually prove that her circuit safely discards qubits using the
denotational semantics of the language. This gives the programmer the
flexibility to use ancillae even when the proofs of such assertions are
non-trivial.

We develop these techniques in the context of \qwire (as in ``require''), a
domain-specific programming language for describing and reasoning about quantum
circuits~\cite{Paykin2017}. \qwire is implemented as a embedded language inside
the Coq proof assistant~\cite{Coq}, which allows us to formally verify
properties of \qwire circuits. These properties can range from coarse-grained
(``this circuit corresponds to a unitary transformation'') to precise (``this
teleport circuit is equal to the identity'')~\cite{Rand2017}. \qwire is an
ongoing project and available for public use at
\url{https://github.com/inQWIRE/QWIRE}.

This paper reports on work-in-progress that makes the following
contributions:\footnote{\cref{sec:related} elaborates on the state of the Coq
  development that underlies this work.}
\begin{itemize}
\item We extend \qwire with assertion-bearing ancillae.
\item We give semantic conditions for the closely related properties of (a) when
  a circuit is reversible and (b) when a
  circuit contains only valid assertions about its ancillae.
\item We provide syntactic conditions that guarantee the correctness of assertions with respect to these semantic conditions.
\item We present a compiler to turn boolean expressions into reversible \qwire circuits and we prove its correctness using our syntactic correctness conditions.
\item Finally, we use our verified compiler to implement a quantum adder.
\end{itemize}



\section{The \qwire Circuit Language}
\label{sec:qwire}
\qwire~\cite{Paykin2017} is a small quantum circuit language designed to be
embedded in a larger, functional programming language. We have implemented
\qwire in the Coq proof assistant, which provides access to dependent types and
the Coq interactive proof system. We use these features to type check \qwire
circuits and verify properties about their semantics~\cite{Rand2017}. In this
section we give a brief introduction to the syntax and semantics of \qwire,
including the new ancilla assertions.

A \qwire circuit consists of a sequence of gate applications terminated with some output wires\footnote{For simplicity, this presentation elides a communication protocol called \emph{dynamic lifting} discussed in prior work on \qwire~\cite{Paykin2017,Rand2017}.}.
\begin{center}
\begin{tabular}{c}
\begin{coq}
Circuit W ::= output p  | gate p' <- g p ; Circuit W
\end{coq}
\end{tabular}
\end{center}

The parameter \coqe{W} refers to a \emph{wire type}: \coqe{Bit}, \coqe{Qubit} or
some tuple of \coqe{Bit}s and \coqe{Qubit}s (including the empty tuple
\coqe{One}). A \emph{pattern of wires}, denoted \coqe{p}, can be a bit-valued %
wire~\coqe{bit v}, a qubit-valued wire \coqe{qubit v}, a pair of wires
\coqe{(p1,p2)} or an empty tuple \coqe{()}.
Gates \coqe{g} are either unitary gates \coqe{U}, drawn from a universal gate
set,
or members of a small set of non-unitary gates:

\begin{center}
\begin{tabular}{c}
 \begin{coq}
 W := Bit | Qubit | One | W ⊗ W
 g := U | init_0 | init_1 | meas | discard | assert_0 | assert_1
 \end{coq}
\end{tabular}
\end{center}
The \coqe{init} and \coqe{meas} gates initialize and measure qubits,
respectively; \coqe{meas} results in a bit, which can be discarded by the
\coqe{discard} bit or used as a control. The \coqe{assert_0} and \coqe{assert_1}
gates take a qubit as input and discard it, provided that it is in the state
$\ket{0}$ or $\ket{1}$ respectively. We will discuss the semantics of these
gates, and how to verify assertions, in \cref{sec:semantics,sec:guarantees}.

As an example, the following \qwire circuit prepares a Bell state:
\begin{center}
 \begin{minipage}{0.35\textwidth}
 \begin{coq}
gate p1 <- init_0 ();
gate p2 <- init_0 ();
gate p1 <- H p1;
gate (p1, p2) <- CNOT (p1, p2);
output (p1,p2)
\end{coq}
\end{minipage}
\begin{minipage}{0.6\textwidth}
\begin{center} \begin{tikzpicture}[scale=0.8, x = 0.8cm, y=-0.8cm]

    \draw (0,0) node[left] {$0$} -- (6,0);
    \draw (0,1) node[left] {$0$} -- (6,1);
            
    \cnot{4}{0}{1}
    \draw[fill=white] (1.5,-0.5) rectangle node {$H$} (2.5,0.5);
    
\end{tikzpicture} \end{center}

\end{minipage}
\end{center}

\qwire also includes some more powerful functionality for constructing circuits.
Circuits can be composed via a let binding \coqe{let_ p <- C; C'}, where the
output of the first circuit \coqe{C} is plugged into the wires \coqe{p} in
\coqe{C'}. It's worth highlighting two useful instances of composition: The
\coqe{inSeq} (\coqe{;;}) function takes a \coqe{Box W W'} and a %
\coqe{Box W' W''} and composes them sequentially to return a \coqe{Box W W''}.
The inPar function likewise takes a circuit \coqe{c1} of type \coqe{Box W1 W1'}
and \coqe{c2} of type \coqe{Box W2 W2'} and composes them in parallel, producing
\coqe{c1 ∥ c2} of type \coqe{Box (W1 ⊗ W2) (W1' ⊗ W2')}.

Circuits can also be \emph{boxed} by collecting the input of a circuit in an
input pattern \coqe{box_ p => C}, creating a closed term of type \coqe{Box W W'}
in the host language. Here, the input wire type \coqe{W} matches the type of the
input wire \coqe{p}, and the output type \coqe{W'} is the same as that of the
underlying circuit. Such a boxed circuit can be \emph{unboxed} to be used again
in other circuits.

Boxing, unboxing, and composing circuits is illustrated by the \emph{teleport}
circuit, where \coqe{alice}, \coqe{bob}, and \coqe{bell00} are simple circuits whose definitions are not shown:

 \begin{minipage}{0.38\textwidth}
\begin{coq}
Definition teleport : Box Qubit Qubit :=
  box_ q =>
    let_ (a,b) <- unbox bell00 () ;
    let_ (x,y) <- unbox alice (q,a) ;
    unbox bob (x,y,b).
 \end{coq}
\end{minipage}
\begin{minipage}{0.65\textwidth}
\begin{center} \begin{tikzpicture}[scale=0.65]
    \draw (1.75,2) -- (12.75,2);
    \draw (1.75,3.5) -- (8,3.5);
          \draw[style=double] (8,3.5) -- (12,3.5);
          \draw (12,3.35) -- (12,3.65);
    \draw (1,5) -- (8,5);
          \draw[style=double] (8,5) -- (12,5);
          \draw (12,4.85) -- (12,5.15);

    \draw[densely dotted] (1,1.5) rectangle  (4,4.25);
    \draw (2.5,4.55) node {\code{bell00}};

    \draw (1.5,2) node {$0$};
    \draw (1.5,3.5) node {$0$};

    \draw[fill=white] (2,3) rectangle node {$H$} (3,4);
    
    \draw[fill=black] (3.5,3.5) circle [radius=0.13];
    \draw (3.5,3.5) -- (3.5,1.75);
    \draw (3.5,2) circle [radius=0.25];

    \draw[densely dotted] (4.5,2.5) rectangle (8.5,6);
    \draw (6.35,6.25) node {\code{alice}};

    \draw[fill=black] (5,5) circle [radius=0.13];
    \draw (5,5) -- (5,3.25);
    \draw (5,3.5) circle [radius=0.25];

    \draw[fill=white] (5.5,4.5) rectangle node {$H$} (6.5,5.5);

    \draw[fill=white] (6.9,4.5) rectangle node {$meas$} (8.35,5.5);
    \draw[fill=white] (6.9,3) rectangle node {$meas$} (8.35,4);

    \draw[densely dotted] (9,1) rectangle (12.5,6);
    \draw (10.5,6.25) node {\code{bob}};

    \draw[fill=white] (9.5,1.5) rectangle node {$X$} (10.5,2.5);
    \draw[fill=white] (11,1.5) rectangle node {$Z$} (12,2.5);

    \draw[style=double] (10,3.5) -- (10,2.5);
    \draw[fill=black] (10,3.5) circle [radius=0.13];

    \draw[style=double] (11.5,5) -- (11.5,2.5);
    \draw[fill=black] (11.5,5) circle [radius=0.13];

\end{tikzpicture} \end{center}

\end{minipage}

\section{A \emph{Safe} and \emph{Unsafe} Semantics}
\label{sec:semantics}
As in prior work, \qwire's semantics is given in terms of \emph{density
  matrices}, denoted $\rho$, that represent distributions over pure quantum
states known as \emph{mixed states}. A \qwire circuit of type \coqe{Box W W'}
maps a $2^{⟦W⟧} × 2^{⟦W⟧}$ density matrix to a $2^{⟦W'⟧} × 2^{⟦W'⟧}$ density
matrix, where $⟦W⟧$ is the size of a wire type:\footnote{In practice, the
  semantics must be ``padded'' by an additional type so that it can be applied
  in a larger quantum system.}
\[
    ⟦\One⟧ = 0 \qquad ⟦\Qubit⟧ = ⟦\Bit⟧ = 1 \qquad ⟦W_1 ⊗ W_2⟧ = ⟦W_1⟧ + ⟦W_2⟧
\]
In this work we use mixed states only to refer to total, as opposed to partial,
distributions. This means that all mixed states in our semantics have traces equal to $1$.

In this work, we give two different semantics for quantum circuits that differ in how they treat assertions. The \emph{safe} semantics corresponds to an operational model that does not trust assertions, where an \coqe{assertx} gate first measures the input qubit before discarding the result. The \emph{unsafe} semantics assumes that all assertions are accurate, so an \coqe{assertx} gate simply discards its input qubit without measuring it. The two semantics coincide exactly when all assertions in a circuit are accurate, in which case we call the circuit \emph{valid}.

In the safe semantics, assertions are identical to the \coqe{discard} gate,
which measures and then discards the qubit. 
\begin{align*}
\mathtt{denote\_safe} \; \mathtt{U} \; \rho &= \denote{U} \rho \denote{U}^{\dag} \\
\mathtt{denote\_safe} \; \mathtt{init_0} \; \rho &= \ket{0} \rho \bra{0} \\
\mathtt{denote\_safe} \; \mathtt{init_1} \; \rho &= \ket{1} \rho \bra{1} \\
\mathtt{denote\_safe} \; \mathtt{meas} \; \rho &= \ket{0}\bra{0} \rho \ket{0}\bra{0} + \ket{1}\bra{1} \rho \ket{1}\bra{1} \\
\mathtt{denote\_safe} \; \{ \mathtt{discard, \; assert_0, \; assert_1\} } \; \rho  &= \bra{0} \rho \ket{0} + \bra{1} \rho \ket{1}
\end{align*}

Here $\denote{U}$ is the unitary matrix corresponding to the gate \coqe{U};
multiplying by $\denote{U}$ and $\denote{U}^{\dagger}$ is equivalent to applying
$\denote{U}$ to all the pure states in the distribution. The initialization
gates \coqe{init0} and \coqe{init1} both add a single qubit to the system in
the $\ket{0}$ and $\ket{1}$ state respectively. The \coqe{meas} gate produces a
mixed state corresponding to a probability distribution over the measurement
result ($\ket{0}$ or $\ket{1}$). The \coqe{discard} gate removes a
classical-valued bit from the state.

Under the safe semantics, the assertions \coqe{assert0} and \coqe{assert1} are
treated as a measurement followed by a discard. This is semantically the same as the denotation of \coqe{discard}, except that \coqe{discard} is guaranteed by the type system to only throw away a classically valued bit. This operation on qubits is safe (\ie results in a total density matrix) even if the qubit is in a superposition of $\ket{0}$ and $\ket{1}$. 


The \emph{unsafe} semantics is the same as the safe semantics, except for \coqe{assert0} and \coqe{assert1}:
\begin{align*}
\mathtt{denote\_unsafe} \; \mathtt{assert_0} \; \rho &= \bra{0} \rho \ket{0} \\
\mathtt{denote\_unsafe} \; \mathtt{assert_1} \; \rho &= \bra{1} \rho \ket{1}
\end{align*}

This is unsafe in the sense that, if $\rho$ isn't in the zero state, then an $\mathtt{assert_0}$ produces a density matrix with a trace less than 1. Operationally, this corresponds to the instruction ``throw away this qubit in the zero state'', which is quantum-mechanically impossible in the general case. However, this semantics corresponds to the intended meaning of \coqe{assert_x} when we know the assertion is true. It also ensures that the composition of \coqe{init_x} with \coqe{assert_x} is equivalent to the identity, which allows us to optimize away qubit initialization and discarding. 

We can define what it means for the ancilla assertions in a circuit to be valid by comparing these two different semantic interpretations.
\begin{coq}
Definition valid_ancillae W (c : Circuit W) : Prop := (denote c = denote_unsafe c).
\end{coq}
An equivalent definition states that the unsafe semantics preserves the trace of its input (which is always $1$) and therefore maps it to a total probability distribution.
\begin{coq}
Definition valid_ancillae' W (c : Circuit W) : Prop :=   
  forall \rho, Mixed_State ρ -> trace (denote_unsafe c ρ) = 1.
\end{coq}
The second definition follows from the first
because the safe semantics is trace preserving. The first follows from the second since \coqe{denote_unsafe c ρ} corresponds to a sub-distribution of \coqe{denote_safe c ρ}. If its trace is one then they must then represent the same distribution.


These two definitions precisely characterize what it means for circuits to have
always correct assertions. 
In the next section, we define syntactic conditions that are sufficient but not
necessary for validity. Programmers will often write syntactically valid circuits
like those produced by \coqe{compile} function in \cref{sec:compilation}), but when
needed the semantic definition of validity is still available. 

An important property related to the validity of a circuit is its
\emph{reversibility}. We say that \coqe{c} and \coqe{c'} are \emph{equivalent},
written \coqe{c ≡ c'}, if both their safe and unsafe denotations are equal.
(If \coqe{c} and \coqe{c'} are valid, this is equivalent to
\coqe{denote c = denote c'}, but otherwise it is a stronger claim.)  
Reversibility says that a circuit has a left and right inverse:
\begin{coq}
Definition reversible {W1 W2} (c : Box W1 W2) : Prop := 
  (exists c', c' ;; c ≡ id_circ) /\ (exists c', c ;; c' ≡ id_circ)
\end{coq}
%
In \cref{sec:compilation}, the compiler produces circuits that are their own inverse:
\begin{coq}
Definition self_inverse {W} (c : Box W W) : Prop := c ;; c ≡ id_circ. 
\end{coq}


We can now show that in any reversible circuit, all the ancilla assertions hold.

\begin{lemma} \label{lem:reversible}
If $c$ is reversible, then it is valid.
\end{lemma}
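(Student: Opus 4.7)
The plan is to use the equivalent trace-based characterization of validity (\code{valid\_ancillae'}): I will show that for every mixed state $\rho$, which has trace $1$, the output $\mathtt{denote\_unsafe}\,c\,\rho$ also has trace $1$. Fix such a $\rho$ and, using reversibility, fix a right inverse $c'$ with $c ;; c' \equiv \code{id\_circ}$. Unfolding $\equiv$, the two unsafe denotations agree, so $\mathtt{denote\_unsafe}\,c'\,(\mathtt{denote\_unsafe}\,c\,\rho) = \rho$, and taking traces gives $\mathrm{tr}(\mathtt{denote\_unsafe}\,c'\,(\mathtt{denote\_unsafe}\,c\,\rho)) = 1$.

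The key auxiliary fact I rely on is that the unsafe semantics is trace-non-increasing: for any circuit $d$ and any positive sub-normalized density matrix $\sigma$,
\[
\mathrm{tr}(\mathtt{denote\_unsafe}\,d\,\sigma) \leq \mathrm{tr}(\sigma).
\]
This is precisely the ``sub-distribution'' observation the text makes just below the definition of \code{valid\_ancillae'}: the safe semantics is trace-preserving, and $\mathtt{denote\_unsafe}\,d\,\sigma$ is always a sub-distribution of $\mathtt{denote\_safe}\,d\,\sigma$. Formally it follows by induction on the structure of $d$: every gate other than $\mathtt{assert}_0$ and $\mathtt{assert}_1$ has identical safe and unsafe denotations, and for the two assertion gates one checks directly that $\mathrm{tr}(\bra{i}\sigma\ket{i}) \leq \mathrm{tr}(\sigma)$ for $i \in \{0,1\}$ since $\bra{0}\sigma\ket{0} + \bra{1}\sigma\ket{1}$ recovers the safe denotation and is a trace of $\sigma$.

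Applying this lemma to the two-step composition above, neither step can strictly decrease the trace, since the overall composition sends a trace-$1$ state to a trace-$1$ state. In particular $\mathrm{tr}(\mathtt{denote\_unsafe}\,c\,\rho) = 1$, which is exactly validity. The main obstacle is really the trace-non-increase lemma itself; once that is in hand, the passage to \cref{lem:reversible} is a one-line trace computation. A small subtlety to watch out for is that the intermediate state $\mathtt{denote\_unsafe}\,c\,\rho$ is only sub-normalized a priori, so the auxiliary lemma must be stated for arbitrary positive sub-normalized matrices rather than only for trace-$1$ states; otherwise the second application of the lemma to $c'$ is not justified. Note that only the right inverse from the definition of reversibility is used, so the same argument shows that any circuit with a right inverse (up to $\equiv$) is already valid.
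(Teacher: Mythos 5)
Your proof is correct and follows essentially the same route as the paper's: both pass to the trace-based characterization of validity, apply the right inverse $c'$ so that $\mathrm{tr}(\mathtt{denote\_unsafe}\,c'\,(\mathtt{denote\_unsafe}\,c\,\rho)) = 1$, and conclude from the trace-non-increasing property of the unsafe semantics that the intermediate trace is already $1$. Your additional remarks --- that the auxiliary lemma must hold for sub-normalized states and that only the right inverse is needed --- are accurate refinements of the same argument, not a different proof.
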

\begin{proof}
  Let $c'$ be $c$'s inverse. By the second definition of validity, it suffices
  to show that the trace of \coqe{denote_unsafe c ρ} is equal to $1$ for every
  initial mixed state $ρ$. We know that the trace of \\ \coqe{denote_unsafe id_circ ρ} is $1$, hence
  \begin{coq}
  1 = trace (denote_unsafe (c;;c') ρ) = trace (denote_unsafe c' (denote_unsafe c ρ))
  \end{coq}
  Because the unsafe semantics is trace-non-increasing, it must be the case that
  the trace of \\ \coqe{denote_unsafe c ρ} is $1$ as well.
\end{proof}


\section{Syntactically Valid Ancillae}
\label{sec:guarantees}
Let $c$ be a circuit made up only of classical gates: the initialization gates,
the not gate \code{X}, the controlled-not gate \code{CNOT}, and the Toffoli gate
\code{T}. Let $c'$ be the result of reversing the order of the gates in $c$ and
swapping every initialization with an assertion of the corresponding boolean
value. Then every assertion in $c ;; c'$, where semicolons denote sequencing, is
valid.

Unfortunately, every circuit of this form is also equivalent to the identity
circuit, so as a syntactic condition of validity, this is much too restrictive.
In practice, the quantum oracles discussed in the introduction are mostly
symmetric, but they introduce key pieces of asymmetry to compute meaningful
results. In \reverc, this construction is called the \emph{restricted inverse};
QCL~\cite{Omer2005} and Quipper~\cite{Green2013} take similar approaches.

Let $c$ be a circuit with an equal number of input and output wires whose qubits
can be broken up into two disjoint sets: the first $n$ qubits are called the
\emph{source}, and the last $t$ qubits are called the \emph{target}. That
is, %
\coqe{c : Box (n+t ⨂ Qubit) (n+t ⨂ Qubit)}. The syntactic condition of
\emph{source symmetry} on circuits guarantees that $c$ is the identity on all
source qubits. In addition, it guarantees that assertions are only made on
source qubits with a corresponding initialization.

A classical gate \emph{acts on the qubit i} if it affects the value of that
qubit in an $m$-qubit system: \coqe{X} acts on its only argument, \coqe{CNOT}
acts on its second argument (the target), and Toffoli acts on its third
argument.

The property of \emph{source symmetry} on circuits is defined inductively as
follows:
\begin{itemize}
\item The identity circuit is source symmetric.
\item If \coqe{g} is a classical gate and \coqe{c} is source symmetric, then
  \coqe{g ;; c ;; g} is source symmetric.
\item If \coqe{g} is a classical gate that acts on a qubit in the target and \coqe{c} is
  source symmetric, then both \coqe{g ;; c} and \coqe{c ;; g} are source symmetric.
\item If \coqe{c} is source symmetric and \coqe{i} is in the source of \coqe{c}, then \\
  \coqe{init_at b i ;; c ;; assert_at b i} is source symmetric.
\end{itemize}

The key property of a source symmetric circuit is that it does not affect the
value of its source qubits. We say that a circuit \coqe{c} is a \emph{no-op} at qubit $i$
if, when initialized with a boolean $b$, the qubit is still equal to $b$ after executing the circuit. We could define this as $\denote{c}(\rho_1 \otimes \ket{b}\bra{b} \otimes \rho_2) = \rho_1' \otimes \ket{b}\bra{b} \otimes \rho_2'$ for some $\rho_1, \rho_2, \rho_1', \rho_2'$, but this would require $\rho_1$ and $\rho_2$ (and $\rho_1'$ and $\rho_2$') to be separable, which is an unnecessary restriction. 
Instead, we use the \coqe{valid_ancillae} predicate and say if we initialize an ancilla in state $x$ at $i$, apply $b$, and then assert that $i = x$, our assertion will be valid:
\begin{coq}
Definition noop_on (m k : nat) (c : Box (Qubits (1 + m)) (Qubits (1+m)) : Prop :=
  ∀ b, valid_ancillae (init_at b i ;; c ;; assert_at b i).
\end{coq}
We similarly define a predicate, \coqe{noop_on_source n}, that says that a given
circuit is a no-op on each of its first $n$ inputs.

These inductive definitions allow us to state a number of closely related lemmas
about symmetric circuits:

\begin{lemma} \label{lem:gate-noop}
  If the classical gate $g$ acts on the qubit $k$ and $i ≠ k$, then $g$ is a
  no-op on $i$.
\end{lemma}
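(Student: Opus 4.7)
The plan is to unfold the definition of \coqe{noop_on} and reduce the claim to a statement about the trace of the unsafe denotation. By the second characterization of validity, it suffices to show that for every mixed state $\rho$ and every boolean $b$,
\[
    \mathrm{trace}\bigl(\denote{\mathtt{init\_at}\;b\;i\;;;\;g\;;;\;\mathtt{assert\_at}\;b\;i}_{\mathit{unsafe}}\,\rho\bigr) = 1.
\]

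The key structural observation is that since $g$ acts on qubit $k$ and $i \neq k$, the gates $g$ and $\mathtt{init\_at}\;b\;i$ operate on disjoint wires. Concretely, after $\mathtt{init\_at}\;b\;i$ the resulting density matrix has the form $\ket{b}\bra{b}$ sitting at position $i$ tensored (in the appropriate padded sense) with the rest of the state. Because $g$ is a classical unitary touching only the qubit at position $k \neq i$, its denotation in the unsafe semantics factors as $I_i \otimes U$ acting on the complementary wires; hence applying $g$ leaves the $|b\rangle\bra{b}$ factor at position $i$ unchanged. Then $\mathtt{assert\_at}\;b\;i$ projects onto $\ket{b}$ at position $i$ and discards that wire, which by $\bra{b}(\ket{b}\bra{b})\ket{b} = 1$ is simply the identity on the remaining state. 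The net effect of the composite circuit is therefore exactly the unitary conjugation $\rho \mapsto U\rho U^{\dagger}$, which is trace-preserving, so the trace remains $1$.

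Concretely, the proof would proceed as follows. First, unfold \coqe{noop_on}, \coqe{valid_ancillae}, and the unsafe denotation of each of the three gates in the sequential composition. Second, use the tensor-product structure of the denotational semantics to rewrite the denotation of $g$ applied after $\mathtt{init\_at}\;b\;i$ as $(I \otimes \denote{g})(\ket{b}\bra{b} \otimes \rho)(I \otimes \denote{g})^{\dagger} = \ket{b}\bra{b} \otimes \denote{g}\rho\denote{g}^{\dagger}$, using the hypothesis $i \neq k$ to ensure that $g$'s support is disjoint from position $i$. Third, apply the assertion, which contracts the $\ket{b}\bra{b}$ factor to a scalar $1$, yielding the unitary image of $\rho$. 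Finally, take traces and use that $\denote{g}$ is unitary.

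The main obstacle will not be the conceptual argument, which is clear, but rather the bookkeeping around positions: the padded semantics places qubit $i$ and qubit $k$ inside a larger tensor product, and one must carefully manipulate the tensor factors (possibly via swap circuits or the lemmas about \coqe{init_at} and \coqe{assert_at}) to expose the disjoint-support structure. In the Coq development this amounts to a case analysis on the relative positions of $i$ and $k$ and some index arithmetic, but each case follows the same pattern sketched above. Once the factorization $\denote{g} = I_i \otimes U$ on the appropriate subsystem is established, the rest reduces to the identities $\bra{b}\ket{b}=1$ and $\mathrm{trace}(U\rho U^{\dagger}) = \mathrm{trace}(\rho) = 1$.
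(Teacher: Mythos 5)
First, a point of comparison: the paper does not actually prove \cref{lem:gate-noop} --- it is explicitly listed among the lemmas that ``have been admitted, rather than proven, in the Coq development,'' and no paper proof is given. So your sketch cannot be measured against an official argument; it has to stand on its own. Your overall strategy is the natural one and mostly sound: reduce \code{noop\_on} to the trace-preservation characterization of validity, observe that \code{init\_at b i} produces a state of the form $\ket{b}\bra{b}$ at position $i$ tensored with the rest, argue that $g$ preserves that factor, and conclude that \code{assert\_at b i} contracts it to the scalar $\braket{b|b}=1$, leaving a trace-preserving map.

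There is, however, a genuine gap in the middle step. You justify ``$g$ preserves the $\ket{b}\bra{b}$ factor'' by claiming that $\denote{g}$ factors as $I_i \otimes U$ because $g$'s support is disjoint from position $i$. But the paper's notion of ``acts on'' is about which qubit's \emph{value} is affected, not which qubits the gate touches: \code{CNOT} acts on its target only, and Toffoli acts on its third argument only. So the hypothesis $i \neq k$ does \emph{not} exclude $i$ from being a \emph{control} wire of $g$; for a \code{CNOT} controlled by qubit $i$ with target $k$, the unitary is $\ket{0}\bra{0}_i \otimes I + \ket{1}\bra{1}_i \otimes X_k$, which does not factor as $I_i \otimes U$. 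The conclusion still holds in that case, but for a different reason: classical gates are block-diagonal in the computational basis of their control wires, so $\denote{g}\,(\ket{b}\bra{b}_i \otimes \rho)\,\denote{g}^{\dagger} = \ket{b}\bra{b}_i \otimes (U_b\,\rho\,U_b^{\dagger})$ for some unitary $U_b$ depending on $b$ (e.g.\ $U_b = X_k^{\,b}$ for the controlled-not). Your proof needs a case split --- $i$ disjoint from $g$'s wires versus $i$ a control of $g$ --- with the second case handled by this classical-control argument rather than by tensor factorization. With that repair the sketch goes through; the remaining work is, as you say, index bookkeeping, plus the (also unproven in the paper) equivalence of the two definitions of \code{valid\_ancillae} that your first step relies on.
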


\begin{lemma} \label{lem:init-assert}   
  Let \code{c} be a circuit such that \code{c ;; assert\_at b i} is a valid assertion. 
\begin{center}
\begin{minipage}{0.55\textwidth}
\begin{center}
\code{c ;; assert\_at b i ;; init\_at b i ≡ c} \quad \text{\ie}
\end{center}
\end{minipage}\qquad\begin{minipage}{0.38\textwidth}
\begin{tikzpicture}[scale=0.55]

    \node at (3,1.5) (assert) {b};
    \node at (4,1.5) (init) {b};

    \draw (-0.5,2.5) -- (5,2.5);
    \draw (-0.5,1.5) -- (assert); \draw (init) -- (5,1.5);
    \draw (-0.5,0.5) -- (5,0.5);

    \draw[fill=white] (0,0) rectangle (2,3);
    \node at (1,1.5) {c};

    \draw (6,1.25) -- (7,1.25);
    \draw (6,1.5) -- (7,1.5);
    \draw (6,1.75) -- (7,1.75);

    \draw (8,2.5) -- (11,2.5);
    \draw (8,1.5) -- (11,1.5);
    \draw (8,0.5) -- (11,0.5);

    \draw[fill=white] (8.5,0) rectangle (10.5,3);
    \node at (9.5,1.5) {c};
\end{tikzpicture}
\end{minipage}
\end{center}
\end{lemma}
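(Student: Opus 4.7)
The plan is to unfold the definition of $\equiv$ and show that for every mixed input $\rho_0$, the safe and unsafe denotations of $c ;; \text{assert\_at}~b~i ;; \text{init\_at}~b~i$ both agree with the corresponding denotation of $c$ applied to $\rho_0$. Let $\sigma_{\mathrm{s}} := \mathtt{denote\_safe}~c~\rho_0$ and $\sigma_{\mathrm{u}} := \mathtt{denote\_unsafe}~c~\rho_0$. The central claim, from which both equalities will follow, is that qubit $i$ of these states is deterministically in $\ket{b}$ and unentangled with the rest, i.e.\ the state factors as $\sigma' \otimes \ket{b}\bra{b}$ at position $i$.

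First I would derive from the validity of $c ;; \text{assert\_at}~b~i$ that $c$ alone is valid, using the second definition of \texttt{valid\_ancillae}: the unsafe denotation of an assert is trace-non-increasing, so if the trace after projection equals $1$, the trace before must also equal $1$; hence $\mathtt{trace}(\sigma_{\mathrm{u}}) = 1$, which gives $\mathtt{valid\_ancillae}~c$ and therefore $\sigma_{\mathrm{s}} = \sigma_{\mathrm{u}} =: \sigma$. Validity of $c ;; \text{assert\_at}~b~i$ then tells us $\mathtt{trace}(\bra{b}_i\,\sigma\,\ket{b}_i) = 1$.

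Next, I would establish the key linear-algebra lemma: if $\sigma$ is a density matrix with $\mathtt{trace}(\bra{b}_i\,\sigma\,\ket{b}_i) = 1$, then $\sigma = \sigma' \otimes \ket{b}\bra{b}$ at position $i$. Block-decomposing $\sigma$ along the $\{\ket{b},\ket{\bar b}\}$ basis of qubit $i$ as $\begin{pmatrix} A & B \\ B^{\dagger} & C \end{pmatrix}$, the hypothesis gives $\mathtt{trace}(A) = 1 = \mathtt{trace}(\sigma)$, so $\mathtt{trace}(C) = 0$. Since $C$ is a principal submatrix of a positive semidefinite matrix, $C$ is positive semidefinite, and a PSD matrix of zero trace is zero; the standard Cauchy--Schwarz-style inequality for PSD block matrices then forces $B = 0$, yielding the desired factorization with $\sigma' = A$.

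With the factorization in hand, the final computation is direct and symmetric in the two semantics. In the unsafe semantics, $\mathtt{assert}_b$ maps $\sigma' \otimes \ket{b}\bra{b}$ to $\sigma'$ by projection, and $\mathtt{init}_b$ re-tensors $\ket{b}\bra{b}$, recovering $\sigma$. In the safe semantics, $\mathtt{assert}_b$ is a partial trace that also sends $\sigma' \otimes \ket{b}\bra{b}$ to $\sigma'$ (because $\mathtt{trace}(\ket{b}\bra{b}) = 1$), and $\mathtt{init}_b$ likewise restores $\sigma$. The main obstacle is the block-PSD argument, which is standard but requires care in Coq because it combines positivity of a principal block, rigidity of trace-zero PSD matrices, and the off-diagonal vanishing from PSD-ness of the full matrix; everything else reduces to unfolding the semantics in \cref{sec:semantics} and equational rewriting.
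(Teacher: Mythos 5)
The paper itself offers no proof of this lemma: it is one of the statements admitted in \code{Symmetric.v} and explicitly listed in \cref{sec:related} as remaining to be proved, so there is nothing to compare your argument against line-by-line. On its own terms, your proof is correct and supplies exactly the missing content. The structure --- (1) deduce validity of \code{c} itself from validity of \code{c ;; assert\_at b i}, so the safe and unsafe denotations of \code{c} agree on a single state $\sigma$; (2) show that the trace condition forces $\sigma$ to factor as $\sigma' \otimes \ket{b}\bra{b}$ at position $i$ via the block-PSD argument ($\mathtt{trace}(C)=0 \Rightarrow C=0 \Rightarrow B=0$); (3) check that both the safe (partial-trace) and unsafe (projection) readings of \code{assert} followed by \code{init} restore $\sigma$ --- is sound, and step (2) is genuinely needed in this generality because the \emph{safe} semantics of the assert is a partial trace, so pointwise projection identities alone would not suffice. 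Two small elisions worth flagging: in step (1), ``the trace before must also equal 1'' requires the complementary bound $\mathtt{trace}(\sigma_{\mathrm{u}}) \le 1$, which follows from the global trace-non-increase of the unsafe semantics that the paper itself invokes in the proof of \cref{lem:reversible}; and your passage from the trace-based definition \code{valid\_ancillae'} back to \code{valid\_ancillae} leans on the equivalence of the two definitions, which the paper asserts (with a sketch) but also leaves unproven in Coq. Neither is a gap in the mathematics, but both would have to be discharged before this argument could be formalized.
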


\begin{lemma} \label{lem:inSeq-noop}
  If \code{c} and \code{c'} are both no-ops on qubit $i$, then \code{c ;; c'} is also a no-op on
  qubit $i$.
\end{lemma}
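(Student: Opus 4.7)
The plan is to convert the composite no-op condition into two applications of the individual no-op hypotheses by inserting a matching \code{assert\_at b i ;; init\_at b i} pair between \code{c} and \code{c'}. Concretely, to establish \code{valid\_ancillae (init\_at b i ;; (c ;; c') ;; assert\_at b i)} for an arbitrary boolean $b$, I would first use the hypothesis that \code{c} is a no-op on $i$: this means \code{(init\_at b i ;; c) ;; assert\_at b i} is a valid assertion, so by Lemma~\ref{lem:init-assert} we have
\[
  \text{\code{init\_at b i ;; c}} \;\equiv\; \text{\code{init\_at b i ;; c ;; assert\_at b i ;; init\_at b i}}.
\]
Composing with \code{c' ;; assert\_at b i} on the right (equivalence is a congruence for sequencing) yields
\[
  \text{\code{init\_at b i ;; c ;; c' ;; assert\_at b i}} \;\equiv\; (\text{\code{init\_at b i ;; c ;; assert\_at b i}}) \;;;\; (\text{\code{init\_at b i ;; c' ;; assert\_at b i}}).
\]

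The next step is to observe that validity is preserved by $\equiv$, since $\equiv$ equates both the safe and unsafe denotations, and that validity is closed under sequential composition. For the latter: if $D_1$ and $D_2$ are both valid, then for any input mixed state $\rho$ we compute
\[
  \text{trace}(\text{denote\_unsafe}(D_1;;D_2)\,\rho) = \text{trace}(\text{denote\_unsafe}\,D_2\,(\text{denote\_unsafe}\,D_1\,\rho)).
\]
Because $D_1$ is valid, \code{denote\_unsafe}\,$D_1\,\rho$ coincides with \code{denote\_safe}\,$D_1\,\rho$, which is a mixed state of trace $1$; applying $D_2$ to this mixed state then preserves trace by validity of $D_2$. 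Thus the composition is valid by the second (trace-based) characterization from Section~\ref{sec:semantics}.

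Applying this to our two factors: \code{init\_at b i ;; c ;; assert\_at b i} is valid by the no-op hypothesis on \code{c}, and \code{init\_at b i ;; c' ;; assert\_at b i} is valid by the no-op hypothesis on \code{c'}. Their sequential composition is therefore valid, and so, by the $\equiv$ we established, \code{init\_at b i ;; (c ;; c') ;; assert\_at b i} is also valid. Since $b$ was arbitrary, \code{c ;; c'} is a no-op on qubit $i$.

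The main obstacle I anticipate is the closure of \code{valid\_ancillae} under $\equiv$ and under sequential composition of valid circuits. The first is essentially by definition but must be carefully stated; the second requires noting that the intermediate state produced by \code{denote\_unsafe}$\,D_1$ is actually a mixed state (not merely a trace-$1$ positive operator) — a fact that follows from $D_1$'s validity, since then its unsafe denotation equals its safe denotation, which preserves mixed states. Once these two auxiliary facts are in hand, the lemma follows from a single algebraic rearrangement.
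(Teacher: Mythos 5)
The paper offers no proof of this lemma to compare against: it is stated without proof and explicitly listed (both after \cref{lem:symmetric-noop} and again in \cref{sec:related}) among the statements admitted rather than proven in the Coq development. Judged on its own terms, your argument is correct and is almost certainly the intended one. The key move --- using the no-op hypothesis on \code{c} together with \cref{lem:init-assert} to insert a cancelling \code{assert\_at b i ;; init\_at b i} pair, so that \code{init\_at b i ;; c ;; c' ;; assert\_at b i} becomes the sequential composition of \code{init\_at b i ;; c ;; assert\_at b i} with \code{init\_at b i ;; c' ;; assert\_at b i} --- is sound, and you correctly isolate the one delicate point in the composition step: the intermediate state is again a genuine mixed state because a valid circuit's unsafe denotation agrees with its trace-preserving safe denotation, so the trace-based characterization of validity from \cref{sec:semantics} applies to the second factor. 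Two caveats on what you are building on: \cref{lem:init-assert} is itself only admitted, and the equivalence of the two definitions of \code{valid\_ancillae} that you use in both directions is likewise flagged by the authors as unproven in Coq, so your lemma is established only relative to those. It would also be worth stating explicitly (rather than parenthetically) that $\equiv$ is a congruence for \code{;;} and that \code{;;} is associative up to $\equiv$; both follow from compositionality of the two denotation functions, but your rearrangement silently uses both.
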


\begin{conjecture} \label{lem:symmetric-noop}
  If \code{c} is source symmetric, then it is a no-op on its source.
\end{conjecture}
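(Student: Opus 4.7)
The plan is to proceed by induction on the derivation of source symmetry, showing in each case that for every source qubit $i$ and every bit $b$, the circuit \code{init\_at b i ;; c ;; assert\_at b i} is valid. The identity case is immediate from \cref{lem:init-assert}, since an assertion applied directly after the matching initialization is trivially valid. For the target gate cases \code{g ;; c'} and \code{c' ;; g}, where \code{g} acts on a target qubit, I combine \cref{lem:gate-noop} (\code{g} is a no-op on every qubit other than the one it acts on, and in particular on every source qubit) with the induction hypothesis applied to \code{c'}, and conclude via \cref{lem:inSeq-noop}.

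For the init/assert case \code{init\_at b' j ;; c' ;; assert\_at b' j}, let $i \neq j$ be a source qubit of the resulting circuit (the source of \code{c'} with $j$ removed). The induction hypothesis tells us that \code{c'} is a no-op on both $i$ and $j$. Using the no-op property at $j$ together with \cref{lem:init-assert}, the inner \code{init}/\code{assert} pair at $j$ can be eliminated up to equivalence, reducing the goal to the statement that \code{c'} is a no-op on $i$, which is again the induction hypothesis.

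The most interesting case is the sandwich \code{g ;; c' ;; g} where \code{g} is one of the classical gates \code{X}, \code{CNOT}, or \code{T}. When \code{g} acts only on a target qubit, the argument reduces to two applications of the target gate case above. The genuinely new situation is when \code{g} touches one or more source qubits, either as the qubit it modifies or as controls. Here the key observation is that each classical gate is its own inverse, so \code{g ;; g} $\equiv$ \code{id\_circ}, and that by the induction hypothesis \code{c'} is a no-op on every source qubit, and in particular on all the source qubits that \code{g} reads or writes. Intuitively, the control values driving the second copy of \code{g} agree with those driving the first, so the second copy undoes the effect of the first, leaving the composite behaving like \code{c'} on the relevant inputs.

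The main obstacle will be making this last step fully rigorous. The no-op property as defined only asserts validity of an assertion after a matching initialization on a single qubit; it does not directly supply a commutation relation of the form \code{g ;; c'} $\equiv$ \code{c' ;; g} on arbitrary input states, which is what the informal cancellation argument above tacitly uses. To push the sandwich case through, I expect we will need a strengthened inductive hypothesis capturing how a source-symmetric circuit acts on the computational basis of its source—roughly, that for every classical assignment of source values, the circuit acts as the identity on the source wires while applying some fixed operation on the target. Proving this stronger statement through each of the four source-symmetry cases, and reconciling it with the shrinking source in the init/assert case, is where the bulk of the technical work will lie; this is likely why the paper lists the statement as a conjecture rather than a lemma.
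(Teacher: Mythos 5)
There is no proof in the paper to compare against: the statement is explicitly labelled a \emph{conjecture}, admitted rather than proven in the Coq development, and the authors themselves remark that the definition of no-op may need to be strengthened for the claim to hold. So the question is whether your sketch closes that gap, and it does not --- but your diagnosis of where the difficulty lies is accurate and coincides with the authors' own caveat.

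Concretely, the induction goes through for the identity, target-gate, and init/assert cases roughly as you describe (modulo two small points: the base case needs the unstated but easy fact that \code{init\_at b i ;; assert\_at b i} is valid, rather than \cref{lem:init-assert} itself; and the init/assert case needs an argument that two assertions that are each valid in isolation remain jointly valid when nested). The genuine obstruction is the sandwich case \code{g ;; c' ;; g} when \code{g} writes to, or is controlled by, source qubits. The inductive hypothesis --- validity of \code{init\_at b i ;; c' ;; assert\_at b i} --- only constrains the behaviour of \code{c'} on a freshly initialized, unentangled qubit in a computational basis state. After the first \code{g}, qubit $i$ may hold $\neg b$, or a value correlated with other, possibly superposed, control wires, and the no-op predicate says nothing about the action of \code{c'} in that configuration; the cancellation of the two copies of \code{g} therefore cannot be derived from the IH as stated, which you correctly flag. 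Your proposed remedy --- strengthening the induction to say that on each computational basis state of the source the circuit acts as the identity on the source while applying a fixed, source-indexed operation to the target, then extending by linearity --- is the right shape of fix, though you would still need to handle relative phases and reconcile the strengthened invariant with the shrinking source in the init/assert case. In short, your proposal is an honest and essentially correct analysis of why this statement is a conjecture, not yet a proof of it.
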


These lemmas have been admitted, rather than proven, in the Coq development (Symmetric.v). \cref{lem:symmetric-noop} is labeled as a \emph{conjecture} rather than a lemma, since we do not yet have a paper proof of the statement. It may be the case that we need to strengthen our definition of no-op for this conjecture to hold.

%
%
%

Since all ancillae in a source symmetric circuit occur on sources, we can prove from the statements above that source symmetric circuits are valid.

\begin{theorem} \label{lem:symmetric-valid}
  If \coqe{c} is source symmetric, then all its assertions are valid.
\end{theorem}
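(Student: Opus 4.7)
The plan is to proceed by structural induction on the inductive definition of source symmetry, which has four clauses. The base case, when \code{c} is the identity circuit, is immediate: the identity contains no assertion gates, so \code{denote} and \code{denote\_unsafe} agree on it trivially. For the three ``classical closure'' clauses \code{g ;; c ;; g}, \code{g ;; c}, and \code{c ;; g}, I would rely on two supporting facts: any classical gate \code{g} has no assertions and so is itself valid (the two semantics differ only on \code{assert\_0} and \code{assert\_1}), and validity is preserved by sequential composition because both semantics interpret \code{;;} as composition of superoperators. Combining these with the inductive hypothesis that the inner \code{c} is valid immediately yields validity of the composite.

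The substantive case is the ancilla-wrapping clause \code{init\_at b i ;; c ;; assert\_at b i} with \code{i} in the source of \code{c}. Here I would appeal directly to \cref{lem:symmetric-noop} to obtain that \code{c} is a no-op on qubit \code{i}. Unfolding the definition of \code{noop\_on} at this \code{c} and \code{i}, and instantiating its universally quantified boolean at the specific \code{b} supplied by the constructor, yields precisely \code{valid\_ancillae (init\_at b i ;; c ;; assert\_at b i)}, which is the conclusion required. Note that in this clause the inductive hypothesis on \code{c} is in fact subsumed by the conclusion of \code{noop\_on}, since \code{valid\_ancillae} applied to the wrapped circuit already entails that every assertion inside \code{c} is valid.

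The main obstacle is plainly \cref{lem:symmetric-noop} itself, which is currently only a conjecture. Without it the wrapping case cannot be closed, since we have no other way to certify that the outer \code{assert\_at b i} sees qubit \code{i} in state $\ket{b}$. A secondary, more routine concern is the suite of compositional lemmas used in the classical closure cases --- that every assertion-free gate is valid and that \code{;;} preserves validity --- which are immediate from the definitions of \code{denote} and \code{denote\_unsafe} but still need to be discharged in the Coq development (Symmetric.v) before the induction above can go through.
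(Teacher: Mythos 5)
Your proposal coincides with the paper's intended argument: the paper offers no written proof beyond the remark that the theorem follows ``from the statements above,'' which is precisely the induction on the source-symmetry derivation you describe, with the classical-closure cases handled by compositionality and the ancilla case discharged by instantiating \cref{lem:symmetric-noop} through the definition of \code{noop\_on}. You also correctly identify the real status of the result --- it is conditional on that conjecture, which the paper itself admits is unproven --- so there is nothing to add.
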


Source symmetric circuits also satisfy a more general property---they are reversible.
The inverse of a source symmetric circuit is defined by induction on source symmetry:
\begin{itemize}
    \item The inverse of the identity circuit is the identity;
    \item The inverse of \coqe{g ;; c ;; g} is \coqe{g ;; c^-1 ;; g};
    \item The inverses of \coqe{c ;; g} and \coqe{g ;; c} are \coqe{g ;; c^-1} and \coqe{c^-1 ;; g}; and
    \item The inverse of \coqe{init_at b i ;; c ;; assert_at b i} is \coqe{init_at b i ;; c^-1 ;; assert_at b i}.
\end{itemize}
Clearly, the inverse of any source symmetric circuit is also source symmetric, and the inverse is involutive, meaning $(c^{-1})^{-1} = c$.

\begin{theorem}
  If \coqe{c} is source symmetric, then \coqe{c^-1 ;; c} is equivalent to the
  identity circuit.
\end{theorem}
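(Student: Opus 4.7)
The plan is to proceed by induction on the definition of source symmetry, using the parallel inductive definition of the inverse $c^{-1}$. In each case we reduce $c^{-1};;c$ to something involving a smaller source symmetric circuit and a ``cancellation'' of adjacent dual constructs, then invoke the induction hypothesis.

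For the base case, $c = \code{id\_circ}$, so $c^{-1};;c \equiv \code{id\_circ};;\code{id\_circ} \equiv \code{id\_circ}$. For the classical-gate cases, I would use the fact that each of our classical gates (\code{X}, \code{CNOT}, \code{T}) is self-inverse, so $g;;g \equiv \code{id\_circ}$. In the sandwiched case $c = g;;c';;g$, the inverse is $g;;c'^{-1};;g$, and we compute
\[
c^{-1};;c \equiv g;;c'^{-1};;g;;g;;c';;g \equiv g;;c'^{-1};;c';;g \equiv g;;\code{id\_circ};;g \equiv \code{id\_circ},
\]
where the second-to-last step uses the induction hypothesis. The one-sided target-gate cases $c = g;;c'$ and $c = c';;g$ are handled similarly: the inverses are $c'^{-1};;g$ and $g;;c'^{-1}$ respectively, and in both cases two copies of $g$ come adjacent and cancel, leaving $c'^{-1};;c' \equiv \code{id\_circ}$ by induction.

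The interesting case is the ancilla case $c = \code{init\_at}~b~i;;c';;\code{assert\_at}~b~i$ where $i$ is in the source of $c'$. The inverse is $\code{init\_at}~b~i;;c'^{-1};;\code{assert\_at}~b~i$, so
\[
c^{-1};;c \;\equiv\; \code{init\_at}~b~i;;c'^{-1};;\code{assert\_at}~b~i;;\code{init\_at}~b~i;;c';;\code{assert\_at}~b~i.
\]
To collapse the middle \code{assert\_at}/\code{init\_at} pair I would apply \cref{lem:init-assert} to the prefix $\code{init\_at}~b~i;;c'^{-1}$. This requires that $\code{init\_at}~b~i;;c'^{-1};;\code{assert\_at}~b~i$ is a valid assertion, i.e.\ that $c'^{-1}$ is a no-op on $i$. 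Since $c'$ is source symmetric, its inverse $c'^{-1}$ is also source symmetric (and $i$ remains in its source), so \cref{lem:symmetric-noop} delivers exactly this no-op property. After the cancellation we are left with $\code{init\_at}~b~i;;c'^{-1};;c';;\code{assert\_at}~b~i$, which by the induction hypothesis collapses to $\code{init\_at}~b~i;;\code{assert\_at}~b~i$; a direct computation on the unsafe semantics shows this is equivalent to \code{id\_circ}.

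The main obstacle is the dependence on \cref{lem:symmetric-noop}, which is currently stated only as a conjecture. Any strengthening of the no-op definition that is needed to establish that conjecture will have to pass through the ancilla case of this theorem as well, since the \cref{lem:init-assert} application is precisely what connects ``inverse cancellation'' to ``no-op on the source.'' Modulo that conjecture, the remaining work is routine algebraic manipulation of gate sequencing plus the self-inverse property of the classical gate set, both of which should be straightforward in the Coq development.
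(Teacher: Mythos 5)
Your proof is correct and follows essentially the same route as the paper: induction on the source-symmetry derivation, cancellation of self-inverse classical gates in the easy cases, and \cref{lem:init-assert} to collapse the middle \code{assert\_at}/\code{init\_at} pair in the ancilla case, finishing with the induction hypothesis and the fact that \code{init\_at}~$b$~$i$~\code{;;}~\code{assert\_at}~$b$~$i$ is the identity. The only cosmetic difference is that the paper discharges the validity hypothesis of \cref{lem:init-assert} by citing \cref{lem:symmetric-valid} (the sandwich \code{init\_at}~$b$~$i$\code{;;}$c^{-1}$\code{;;}\code{assert\_at}~$b$~$i$ is itself source symmetric), whereas you route through \cref{lem:symmetric-noop} and closure of source symmetry under inversion; both rest on the same conjectural foundation, which you rightly flag.
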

\begin{proof}
  By induction on the proof of source symmetry. The only interesting case is the
  case for ancilla, showing
  \begin{coq}
init_at b i ;; c^-1 ;; assert_at b i ;; init_at b i ;; c ;; assert_at b i ≡ id_circ.
  \end{coq}
  From \cref{lem:symmetric-valid} we know that the circuit
  \coqe{init_at b i ;; c^-1 ;; assert_at b i} is valid. Then 
  \cref{lem:init-assert} tells us that
  \coqe{init_at b i ;; c^-1 ;; assert_at b i;; init_at b i} is equivalent to
  \coqe{init_at b i ;; c^-1}.  Thus the goal reduces to \coqe{init_at b i ;; c^-1 ;; c ;; assert_at b i}.
  This is equivalent to the identity by the induction hypothesis as well as the
  fact that \coqe{init_at b i ;; assert_at b i} is the identity.
\end{proof}

%
%
%

We can now say that any circuit followed by its inverse is valid. But this theorem is easily extensible. For instance, we can add the following to our inductive definition of symmetric, and the theorem will still hold:
\begin{itemize}
\item If \coqe{c} is source symmetric, and \coqe{c ≡ c'}, then \coqe{c'} is source symmetric.
\end{itemize}
This extension allows us to use existing (semantic) equivalences to satisfy our
(syntactic) source symmetry predicate, which in turn proves the semantic
property of validity. For example, because teleportation is semantically
equivalent to the identity circuit, we know trivially that it is valid, even
though it is not source symmetric. The Coq development provides many useful
compiler optimizations in the file \coqe{Equations.v} that can now be used in
establishing source symmetry.

\section{Compiling Oracles}
\label{sec:compilation}
Now that we have syntactic guarantees for circuit validity, we can consider a
compiler from boolean expressions to source-symmetric circuits, producing the
quantum oracles described in the introduction to this chapter. The resulting circuits are all
source symmetric, so it follows from the previous section that they are valid.

We begin with a small boolean expression language, borrowed from Amy \etal \cite{Amy2017}, with variables, constants, negation ($\neg$), conjunction ($\wedge$), and exclusive-or ($\oplus$).
\[ b ::= x \mid \TRUE \mid \FALSE \mid \neg b \mid b_1 \wedge b_2 \mid b_1 \oplus b_2 \]

The \emph{interpretation function} \coqe{[b]_f} takes a boolean
expression \coqe{b} and a valuation function \coqe{f : Var -> bool} and returns the 
value of the boolean expression with the variables assigned as in \coqe{f}.

The compiler, shown in \cref{fig:compiler}, takes a boolean expression \coqe{b}
and a \emph{map} \coqe{Γ} from the variables of $b$ to the wire
indices\footnote{In the Coq development, these maps are represented by linear
  typing contexts.}. The resulting circuit \coqe{compile b Γ} has \coqe{|Γ|+1}
qubit-valued input and output wires, where \coqe{|Γ|} is the number of variables
in the scope of \coqe{b}.

\begin{figure}
\begin{coq}
Fixpoint compile (b : bexp) (Γ : Ctx) : Square_Box (S (⟦Γ⟧) ⨂ Qubit) :=
  match b with
  | b_t          => TRUE || id_circ 
  | b_f          => FALSE || id_circ
  | b_var v      => CNOT_at (1 + index v Γ) 0
  | b_not b      => init_at true 1             ;;
                    id_circ || (compile b Γ)   ;;
                    CNOT_at 1 0                ;;
                    id_circ || (compile b Γ)   ;;
                    assert_at true 1 
  | b_and b1 b2  => init_at false 1            ;;
                    id_circ || compile b1 Γ    ;;
                    init_at false 2            ;;
                    id_circ || id_circ || compile b2 Γ ;;
                    Toffoli_at 1 2 0                   ;;
                    id_circ || id_circ || compile b2 Γ ;;
                    assert_at false 2          ;;
                    id_circ || compile b1 Γ    ;;
                    assert_at false 1 
  | b_xor b1 b2  => init_at false 1            ;;
                    id_circ || compile b1 Γ    ;;
                    CNOT_at 1 0                ;;                    
                    id_circ || compile b1 Γ    ;; 
                    id_circ || compile b2 Γ    ;;
                    CNOT_at 1 0                ;;                    
                    id_circ || compile b2 Γ    ;;
                    assert_at false 1
  end.
\end{coq}
\caption{Compiler from boolean expressions to source symmetric circuits.}
\label{fig:compiler}
\end{figure}

The compiler uses \coqe{init_at}, \coqe{assert_at}, \coqe{X_at}, \coqe{CNOT_at}, 
and \coqe{Toffoli_at} circuits, each of which applies the corresponding gate to
the given index in the list of $n$ wires. 
It makes heavy use of the sequencing (\coqe{;;}) and parallel (\coqe{||})
operators. The TRUE case in \cref{fig:compiler} outputs the exclusive-or of
\coqe{true} with the target wire, which is equivalent to simply negating the
target wire; the \coqe{FALSE} case reduces to the identity. The variable case
\coqe{b_var} applies a CNOT gate from the variable's associated wire to the
target, thereby sharing its value.

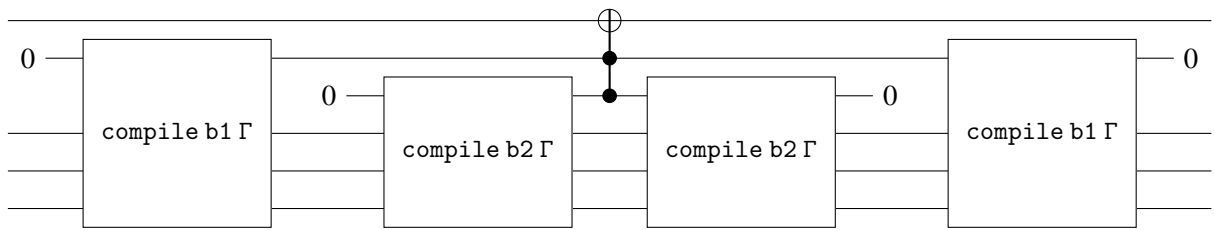
\begin{figure}
\begin{tikzpicture}[scale=1]
   \draw(-8,0.25) -- (8,0.25);
   \draw(-8,0.75) -- (8,0.75);
   \draw(-8,1.25) -- (8,1.25);
   \draw(-3.5,1.75) node[left]{$0$} -- (3.5,1.75) node[right]{$0$};
   \draw(-7.5,2.25) node[left]{$0$} -- (7.5,2.25) node[right]{$0$};
   \draw(-8,2.75) -- (8,2.75);

    \draw[fill=white] (-7,0) rectangle node {\coqe{compile b1 Γ}} (-4.5,2.5);
   \draw[fill=white] (4.5,0) rectangle node {\coqe{compile b1 Γ}} (7,2.5);
    \draw[fill=white] (-3,0) rectangle node {\coqe{compile b2 Γ}} (-0.5,2);
    \draw[fill=white] (0.5,0) rectangle node {\coqe{compile b2 Γ}} (3,2);

    
    \draw[thick] (0,1.75) -- (0,2.9);
    \draw (0,2.75) circle [radius=1.5mm];
    \draw[fill=black] (0,2.25) circle [radius=0.9mm];
    \draw[fill=black] (0,1.75) circle [radius=0.9mm];
    

\end{tikzpicture}
\caption[Compiling $b_1 \wedge b_2$ on 3 qubits]{Compiling $b_1 \wedge b_2$ on 3 qubits. The top wire is the target.}
\label{fig:compile-and}
\end{figure}

The AND case (\cref{fig:compile-and}) is more interesting. We first initialize a qubit in the $0$ state and recursively compile the value of \coqe{b1} to it. We then do the same for \coqe{b2}. We apply a Toffoli gate from \coqe{b1} and \coqe{b2}, now occupying the $1$ and $2$ positions in our list, to the target qubit at $0$. We then reapply the symmetric functions \coqe{compile b2 Γ} and \coqe{compile b1 Γ} to their respective wires, returning the ancillae to their original states and discarding them. We are left with the target wire \coqe{z} holding the boolean value $b_z \oplus (b_1 \wedge b_2)$ and \coqe{|Γ|} wires retaining their initial values. 

Finally, we have the XOR case.  Here we borrow a trick from \reverc~\cite{Amy2017} and allocate only a single ancilla instead of the two we used in the AND case. Instead of calculating $(b_1 \oplus b_2) \oplus t$, where $t$ is the target, we calculate the equivalent $b_2 \oplus (b_1 \oplus t)$, taking advantage of the associativity and commutativity of $\oplus$. Hence, as soon as we've computed $b_1$, we can apply a $CNOT$ from $b_1$ to the target and immediately uncompute $b_1$. This frees up our ancilla, which we then use as a target for \coqe{compile b2}.  

Note that our entire \coqe{compile} circuit is source symmetric, and therefore our assertions are guaranteed to hold by \cref{lem:symmetric-valid}.

We can now go about proving the correctness of this compilation.
\begin{coq}
Theorem compile_correct : forall (b : bexp) (Γ : Ctx) (f : Var -> bool) (z : bool), 
  vars b ⊆ domain Γ ->
  [[compile b Γ]] (bool_to_matrix t ⊗ basis_state Γ f) = 
  bool_to_matrix (z ⊕ [b]_f) ⊗ basis_state Γ f.
\end{coq}

The function \coqe{basis_state} takes the wires referenced by $\Gamma$ and the
assignments of $f$ and produces the corresponding basis state. This forms the
input to the compiled boolean expression along with the target, a classical
qubit in the $\ket{0}$ or $\ket{1}$ state. The statement of \coqe{compile}'s
correctness says that when we apply \coqe{[[compile b Γ]]} to this basis state
with an additional target qubit, we obtain the same matrix with the result of
the boolean expression on the target. The proof follows by induction on the
boolean expression.

\section{Quantum Arithmetic in \qwire}
\label{sec:arithmetic}
In this section, we show how to use the compiler from the previous section to
implement a quantum adder, which has applications in many quantum algorithms,
including Shor's algorithm.
A verified quantum adder is therefore an important step towards verifying a
variety of quantum programs.

The input to an adder consists of two $n$-qubit numbers represented as sequences
of qubits $x_{1:n}$ and $y_{1:n}$, as well as a carry-in qubit $c_{in}$. The
output consists of the sum $sum_{1:n}$ and the carry-out $c_{out}$.

To begin, consider a simple 1-bit adder that takes in three bits, $c_{in}$, $x$, and
$y$, and computes their sum and carry-out values. The sum is equal to
$x \oplus y \oplus c_{in}$, and the carry is
$(c_{in} \land (x \oplus y)) \oplus (x \land y)$. The expressions can be
compiled to $4$- and $5$-qubit circuits \coqe{adder_sum} and \coqe{adder_carry},
respectively, where the order of qubits is $c_{out}$, $sum$, $y$, $x$, and
$c_{in}$.
\begin{coq}
Definition adder_sum : Box (4 ⨂ Qubit) (4 ⨂ Qubit) := 
  compile ((c_in ∧ (x ⊕ y)) ⊕ (x ∧ y)) (list_of_Qubits 4).
Definition adder_carry : Box (5 ⨂ Qubit) (5 ⨂ Qubit) := 
  compile (x ⊕ y ⊕ c_in) (list_of_Qubits 5).
Definition adder_1 : Box (5 ⨂ Qubit) (5 ⨂ Qubit) := 
  adder_carry ;; (id_circ || adder_sum).
\end{coq}
Here, \coqe{adder_sum} computes the sum of its three input bits and
\coqe{adder_carry} computes the carry, ignoring the result of \coqe{adder_sum}.
Semantically, the adder should produce the appropriate boolean values; the
operation \coqe{bools_to_matrix} converts a list of booleans to a density
matrix.
\begin{coq}
Lemma adder_1_spec : forall (cin x y sum cout : bool),
  [[adder_1]] (bools_to_matrix [cout; sum; y; x; cin])
= (bools_to_matrix [ cout ⊕ (c_in ∧ (x ⊕ y) ⊕ (x ∧ y)); 
                   ; sum ⊕ (x ⊕ y ⊕ c_in)
                   ; y; x; cin]).
\end{coq}

Next, we extend the 1-qubit adder to $n$ qubits. The $n$-qubit adder contains
two parts---\coqe{adder_left} and \coqe{adder_right}---defined recursively using
padded \coqe{adder_1} and \coqe{adder_carry} circuits. The left part computes
the sum and carry sequentially from the least significant bit, initializing an
ancilla for the carry in each step. When it reaches the most significant bit, it
computes the most significant bit of the sum and carry-out using the 1-qubit
adder. The right part of the adder uncomputes the carries and discards the
ancillae. The definitions of the circuits are shown below and illustrated in
\cref{fig:adder}.
\begin{coq}
Fixpoint adder_left (n : nat) : Box ((1+3*n) ⨂ Qubit) ((1+4*n) ⨂ Qubit) := 
 match n with
 | S n' => (id_circ || (id_circ || (id_circ || (adder_left n')))) ;;
           (init_at false (4*n) 0) ;;
           (adder_1_pad (4*n')) 
 end.
Fixpoint adder_right (n : nat) : Box ((1+4*n) ⨂ Qubit) ((1+3*n) ⨂ Qubit) := 
 match n with
 | O => id_circ
 | S n' => (adder_carry_pad (4*n')) ;;
           (assert_at false (4*n) 0) ;;
           (id_circ || (id_circ || (id_circ || (adder_right n')))) 
 end.
Fixpoint adder_circ (n : nat) : Box ((2+3*n) ⨂ Qubit) ((2+3*n) ⨂ Qubit) := 
 match n with
 | O => id_circ
 | S n' => (id_circ || (id_circ || (id_circ || (id_circ || (adder_left n')))));; 
           (adder_1_pad (4*n')) ;;
           (id_circ || (id_circ || (id_circ || (id_circ || (adder_right n'))))) 
 end.
\end{coq}

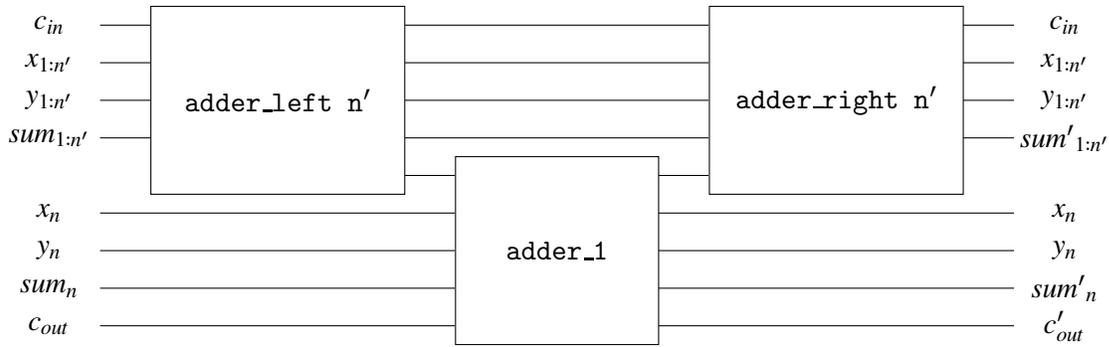
\begin{figure}
\begin{tikzpicture}[y=-0.5cm,x = 1.35cm, scale=1]
  
    \draw (0,0) -- (9,0);
    \draw (0,1) -- (9,1);
    \draw (0,2) -- (9,2);
    \draw (0,3) -- (9,3); 
    \draw (2,4) -- (7,4); 
    \draw (0,5) -- (9,5);
    \draw (0,6) -- (9,6);
    \draw (0,7) -- (9,7);
    \draw (0,8) -- (9,8);


  \draw (-0.5,0) node {$c_{in}$};
  \draw (-0.5,1) node {$x_{1:n'}$};
  \draw (-0.5,2) node {$y_{1:n'}$};
  \draw (-0.5,3) node {$sum_{1:n'}$};
  \draw (-0.5,5) node {$x_n$};
  \draw (-0.5,6) node {$y_n$};  
  \draw (-0.5,7) node {$sum_n$};
  \draw (-0.5,8) node {$c_{out}$};  

  \draw (9.5,0) node {$c_{in}$};
  \draw (9.5,1) node {$x_{1:n'}$};
  \draw (9.5,2) node {$y_{1:n'}$};
  \draw (9.5,3) node {${sum'}_{1:n'}$};

  \draw (9.5,5) node {$x_n$};
  \draw (9.5,6) node {$y_n$};  
  \draw (9.5,7) node {${sum'}_n$};
  \draw (9.5,8) node {$c'_{out}$};

    \draw[fill=white] (0.5,-0.5) rectangle node {\texttt{adder\_left }$\mathtt{n'}$} (3,4.5);
    \draw[fill=white] (6,-0.5) rectangle node {\texttt{adder\_right }$\mathtt{n'}$} (8.5,4.5);
    \draw[fill=white] (3.5,3.5) rectangle node {\texttt{adder\_1}} (5.5,8.5);

\end{tikzpicture}
\caption[A quantum adder]{A quantum circuit for the $n$-adder where $n' = n-1$ . The $n'$ ancillae created in \coqe{adder_left} are all terminated inside \coqe{adder_right}.}
\label{fig:adder}
\end{figure}

We now prove the correctness of the $n$-qubit adder: 
\begin{coq}
Lemma adder_circ_n_spec : forall (n : nat) (f : Var -> bool),
    let li := list_of_Qubits (2 + 3 * n) in
    [[adder_circ_n n]] (ctx_to_matrix li f)
    = (ctx_to_matrix li (compute_adder_n n f)).
\end{coq}

Like \coqe{bools_to_matrix} above, \coqe{ctx_to_matrix} takes in a context and an assignment $f$ of variables to booleans and constructs the corresponding density matrix. The function \coqe{compute_adder_n} likewise takes a function $f$ that assigns values to each of the $3*n + 2$ input variables and returns a boolean function $f'$ representing the state of the same variables after addition (computed classically). The specification states that the $n$-bit adder circuit computes the state corresponding to the function \coqe{compute_adder_n} for any initial assignment.

Note that the lemma gives a correspondence between the denotation of the circuit and functional computation on the assignment. This can reduce the time required to verify more complex arithmetic circuits. A natural next step is to verify the correspondence between our functions on lists of booleans and Coq's binary representations of natural numbers, thereby grounding our results in the Coq standard library and allowing us to easily move between numerical representations.

\section{Related and Future Work}
\label{sec:related}
The area of reversible computation well predates quantum
computing. 
Bennett~\cite{Bennett1973} introduced the reversible Turing
machine in 1973, with the intent of designing a computer with low energy consumption,
since destroying information necessarily dissipates energy. Toffoli designed
the general approach for converting classical circuits to reversible ones
presented in our introduction. While these ideas strongly influenced quantum
computation, reversible computation is a subject of great interest in its own
right, and we refer the interested reader to a standard text on the
subject~\cite{Devos2011,Perumalla2013}.

This work builds heavily on the Quipper quantum programming
language~\cite{Green2013, Green2013a}, which includes ancillae terminations that
are optimized away by joining them to corresponding initializations.
Unfortunately, as is noted in the introduction, the language has no way of
checking its ``assertive terminations'':
\begin{quote}
The first thing to note is that it is the programmer, and not the compiler, who is asserting that the qubit is in state $\ket{0}$ before being terminated. In general, the correctness of such an assertion depends on intricacies of the particular algorithm, and is not something that the compiler can verify automatically. It is therefore the programmer’s responsibility to ensure that only correct assertions are made. The compiler is free to rely on these assertions, for example by applying optimizations that are only correct if the assertions are valid.~\cite{Green2013}
\end{quote}
This work was motivated precisely by the desire to fill in this gap, and by Quipper's demonstration of the power of assertive terminations. 

The other important work in this space is Amy \etal's \reverc~\cite{Amy2017},
which builds upon the \revs programming language~\cite{Parent2017}, a small
heavily-optimized language for reversible computing. \reverc verifies many of
the optimizations from \revs and includes a compiler from boolean expressions
to reversible circuits. The validity of this compilation is verified in the
F$^\star$ programming language~\cite{Swamy2016}. One key challenge in this paper
was to port that compiler from a language that uses only classical operations on
numbered registers (and whose semantics are therefore in terms of boolean
expressions), to a language using higher-order abstract syntax whose denotation
is in terms of density matrices (representing pure and mixed quantum states).

\paragraph{The State of \qwire}

This paper, and the whole \qwire project, is a work in progress. \qwire has been
used to verify some interesting programs, including quantum teleportation, superdense coding,
Deutsch's algorithm and a variety of random number generators (see HOASProofs.v
in the Coq development). It can also be used to prove the validity of a number of circuit
optimizations, such as those of Staton~\cite{Staton2015} (see Equations.v).
However, much remains to be done. The authors' goal is to formally verify all of
the claims in this paper, though some work still remains.

In particularly, the following lemmas remain to be proved in Coq, by section:
\begin{itemize}
\item In \cref{sec:semantics}, the proof of the equivalence of the two
  definitions of \coqe{valid_ancillae} (though the this paper does not build on
  that equivalence); and the proof of \cref{lem:reversible}.
\item In \cref{sec:guarantees},
  \cref{lem:gate-noop,lem:init-assert,lem:inSeq-noop,lem:symmetric-noop}.
\item In \cref{sec:compilation}, that the \coqe{CNOT_at} and \coqe{Toffoli_at}
  circuits, as well as sequencing \coqe{;;} and parallel \coqe{∥} combinators,
  match their intended semantics.
\end{itemize}

The next step for \qwire is to implement and verify circuit optimizations. We already have a number
of equivalences we can in principle use to rewrite our circuits, and this work introduces new possible 
optimization, like reusing ancillae. It also allows us to treat circuits that properly initialize and dispose of ancilla as unitary circuits, allowing for further optimizations. Given that much of the progress towards practical quantum computing comes from increasingly clever optimizations (in tandem with more powerful quantum computers), verified compilation should play an important and exciting role in the near future. 

We can also expand our boolean expression language to allow us to compile a broader range of classical functions. This would allow us to program the adder from the previous section entirely within this language and then compile it to a quantum circuit. Such a program would require the following features:
\begin{enumerate}
\item Pairs would allow us to represent binary numbers, where \coqe{(true,(true,(false,true)))} could represent $1101$ or $13$. 
\item Projection operators would allow us to extract values from pairs.
\item Let bindings would allow us to reuse sub-circuits for efficiency.
\end{enumerate}
We could then make our \coqe{bexps} dependently typed, which would allow us to associate \coqe{bexps} with the number of wires entering and exiting the corresponding circuit. We could even include types for $n$-bit numbers that correspond to product types. And naturally, there is much more that we could do with the \coqe{bexp} language, including adding lambdas, branching, recursion, and other common programming language idioms.

We would also like to optimize our current compiler. Our \coqe{compile} function borrows a trick from \reverc~\cite{Amy2017}, in that it doesn't use additional ancilla to compile exclusive-ors. However, there are many optimizations that remain to be done, and, given the limitations of today's quantum computers, they are worth implementing.

Finally, a recent innovation in the area of quantum computing concerns so-called \emph{dirty ancillae}. We call an ancilla ``dirty" if it may be initialized in an arbitrary state, not only $\ket{0}$. Haner \etal \cite{Haner2016} show that these can take the place of our ``clean'' ancillae in many quantum circuits, and Q\# \cite{Svore2018} allows us to ``borrow'' a qubit, use it, and then return it to its initial state. Extending the work in this paper to verify that dirty ancillae are uncomputed would require substantial additional machinery but it would have a significant payoff in terms of expressivity without sacrificing reliability.




\bibliographystyle{eptcsstyle/eptcs}
{\small
\bibliography{biblio}
}

\end{document}